\newtheorem{problem}{Problem}
\newcommand{\spara}[1]{\smallskip\noindent{\bf{#1}}}
\newcommand{\field}[1]{\ensuremath{\mathbb{#1}}}
\newcommand{\bff}{\ensuremath{\bm{f}}}
\newcommand{\bfz}{\ensuremath{\bm{z}}}
\newcommand{\bfs}{\ensuremath{\bm{s}}}
\newcommand{\eg}{\ensuremath{g}}
\newcommand{\NP}{\ensuremath{\mathbf{NP}}}
\newcommand{\bigO}{\ensuremath{{\cal O}}}
\newcommand{\campaign}{{\textsc{Campaign}}}
\newcommand{\internal}{{\textsc{i-Campaign}}}
\newcommand{\expressed}{{\textsc{Campaign}}}
\newcommand{\greedy}{{\texttt{Greedy}}}
\newcommand{\degree}{{\texttt{Degree}}}
\newcommand{\freedegree}{{\texttt{FreeDegree}}}
\newcommand{\rwr}{{\texttt{RWR}}}
\newcommand{\mins}{{\texttt{Min-S}}}
\newcommand{\minz}{{\texttt{Min-Z}}}
\newcommand{\icampaign}{{\texttt{i-Campaign}}}
\newcommand{\copynodes}{{\ensuremath{\overline{V}}}}
\newcommand{\copyedges}{{\ensuremath{\overline{E}}}}
\newcommand{\karate}{{{\bf karate club}}}
\newcommand{\lesmis}{{{\bf les miserables}}}
\newcommand{\dolphins}{{{\bf dolphins}}}
\newcommand{\bib}{{{\bf bibsonomy}}}
\newcommand{\dblp}{{{\bf dblp}}}
\begin{document}

\title{Opinion maximization in social networks}

\author{
Aristides Gionis\thanks{Aalto University, Finland. This work was done while the author was with Yahoo!\ Research.} \\ \and
Evimaria Terzi\thanks{Boston University, USA. Supported by NSF awards: CNS-1017529, III-1218437 and gifts from Microsoft and Google.} \\ \and
Panayiotis Tsaparas\thanks{University of Ioannina, Greece.}
}

\date{}

\maketitle

\begin{abstract}

The process of opinion formation through synthesis and contrast of different viewpoints
has been the subject of many studies in economics and social sciences.
%~\cite{acemoglu11opinion,degrout74reaching,demarzo03persuasion,friedkin90social,golub10naive,jackson08social}.
%Nowadays, such phenomena manifest themselves in online social networks and social media too.
Today, this process manifests itself also in online social networks and social media.
The key characteristic of successful promotion campaigns is that they take
into consideration such opinion-formation dynamics in order to create a
overall favorable opinion about a specific information item, such as a
person, a product, or an idea.

In this paper, we adopt a well-established model for social-opinion
dynamics and formalize the campaign-design problem
as the problem of identifying a set of target individuals whose
positive opinion about an information item
will maximize the overall positive opinion for the item in the social network.
We call this problem {\campaign}.
We study the complexity of the {\campaign} problem, and design algorithms for solving it.
Our experiments on real data demonstrate the efficiency and practical utility of our algorithms.
\end{abstract}

\section{Introduction}
Individuals who participate in social networks form their opinions through synthesis and contrast of different viewpoints they encounter in their social circles.
Such processes manifest themselves more strongly in online social networks and social media
where opinions and ideas propagate faster through virtual connections between social-network individuals.
Opinion dynamics have been of considerable interest to marketing and opinion-formation agencies, which are interested in raising public awareness on important issues (e.g., health, social justice), or increasing the popularity of person or an item (e.g., a presidential candidate, or a product).

Today, social-networking platforms and social media take up a significant amount of any promotion campaign budget.
It is not uncommon to see activist groups, political parties, or corporations launching campaigns primarily via Facebook or Twitter.
Such campaigns interfere with the opinion-formation process by influencing the opinions of appropriately selected individuals, such that the overall favorable opinion about the specific information item is strengthened.

The idea of leveraging social influence for marketing campaigns has been studied extensively in data mining.
Introduced by the work of Domingos and Richardson~\cite{richardson02mining}, and Kempe et al.~\cite{kempe2003maximizing} the problem of {\em influence maximization} asks to identify the most influential individuals, whose adoption of a product or an action will spread maximally in the social network.
This line of work employs probabilistic propagation models, such as the
{\em independent-cascade} or the {\em linear-threshold} model,
which specify how actions spread among the individuals of the social network.
At a high level, such propagation models distinguish the individuals to either {\em active} or {\em inactive}, and assume that active individuals influence their neighbors to become active according to certain probabilistic rules, so that activity spreads in the network in a {\em cascading} manner.

In this paper, we are interested in the process of how individuals form opinions, rather than how they adopt products or actions.
We find that models such as the independent cascade and the linear threshold are not appropriate for modeling the process of forming opinions in social networks.
First, opinions cannot be accurately modeled by binary states, such as being either active or inactive, but they can take continuous values in an interval.
Second, and perhaps more importantly, the formation of opinions does not resemble a discrete cascading process;
it better resembles a social game in which individuals constantly influence each other until convergence to an equilibrium.

Accordingly, we adopt the opinion-formation model of Friedkin and Johnsen~\cite{friedkin90social},
which assumes that each node $i$ of a social network $G=(V,E)$
has an internal opinion $s_i$ and an expressed opinion $z_i$.
While the internal opinions of individuals are fixed and not amenable
to external influences, the expressed opinions can change due to the influence
from their neighbors.
More specifically, people change their expressed opinion $z_i$ so that they minimize their \emph{social cost}, which is defined as the disagreement between their expressed opinion and the expressed opinion of their social neighbors, and the divergence between the expressed opinion and their true internal belief.
%  while staying true to their own opinion.
It can be shown that the process of computing expressed opinion values by repeated averaging
leads to a Nash equilibrium of the game where the utilities are the node social costs~\cite{DBLP:conf/focs/BindelKO11}.

Armed with this opinion-formation model, we introduce and study a problem similar to the influence-maximization problem, introduced by Kempe et al.~\cite{kempe2003maximizing}.
We consider a social network of individuals, each holding an opinion value about an information item.
We view an opinion as a numeric value between zero (negative) and one (positive).
We then ask to identify $k$ individuals,
so that once their expressed opinions becomes 1,
the opinions of the rest of the nodes --- at an equilibrium state --- are, on average, as positive as possible.
We call this problem {\campaign}.

%Our framework has very important differences from existing work in influence maximization.
 
Note that there is a fundamental qualitative difference between our framework and existing work on influence maximization: In contrast to other models used in influence-maximization literature,
our work views the nodes as rational individuals who wish to optimize their own objectives.
Thus, we assume that the opinions of individuals get formed through best-response dynamics of a social game, which in turn is inspired by classical opinion-dynamics models studied extensively in
economics~\cite{acemoglu11opinion,degrout74reaching,demarzo03persuasion,golub10naive,jackson08social}.

Furthermore, from a technical point of view, the maximization problem that results from our framework
%has completely different structure
requires a completely different toolbox of techniques
than the standard influence-maximization problem.
%Thus, we need to develop a new toolbox of techniques in order to deal with these types of problems.
To this end, we exploit an interesting connection between our problem and absorbing random walks
in order to establish our technical results, including the complexity and the approximability of the problem.
%we establish our technical results, including the complexity and the approximability of the problem,
%by exploiting an interesting connection between our problem and absorbing random walks.
Interestingly, as with the influence-maximization problem, we show that the objective function of the {\expressed} problem is submodular,  and thus, it can be approximated within factor $(1-\frac{1}{e})$
using a greedy algorithm.

In addition, motivated by the properties of the greedy algorithm, we propose scalable heuristics that can handle large datasets.
Our experiments show that in practice the heuristics have performance comparable to the greedy algorithm, and that are scalable to very large graphs.
Finally, we discuss two natural variants of the {\campaign} problem. Our discussion reveals a surprising property of the opinion formation process on \emph{undirected} graphs: the average opinion of the network depends only on the internal opinions of the individuals and not on the network structure.

\section{Related Work}
\label{section:related}
To the best of our knowledge, we are the first to formally
define and study the {\expressed} problem. However,
our work is related to a lot of existing work in economics,
sociology and computer science.

In his original work in 1974, DeGrout~\cite{degrout74reaching}
was the first to define a model where individuals organized in
a social network $G=(V,E)$ have a starting opinion (represented
by a real value) which they update
by repeatedly adopting the average opinion of their friends.
This model and its variants, has been subject of recent
studies in social sciences and economics~\cite{acemoglu11opinion,demarzo03persuasion,friedkin90social,
golub10naive,jackson08social}.
Most of these studies focus on identifying the conditions
under which such repeated-averaging models converge to a consensus.
In our work, we also adopt a repeated averaging model. However,
our focus is not the characterization or the reachability of consensus. 
In fact, we do not assume consensus is ever reached.
Rather, we assume that the individuals participating in the network
reach a stable point, where everyone has crystallized
a personal opinion. As has been recently noted by social scientist
Davide Krackhardt~\cite{krackhardt09plunge}, studies of such non-consensus
states are much more realistic since consensus states are rarely reached.

Key to our paper is the work by Bindel et
al.~\cite{DBLP:conf/focs/BindelKO11}. In fact, we adopt
the same opinion-dynamics model as in~\cite{DBLP:conf/focs/BindelKO11},
where individuals selfishly form opinions to minimize their personal cost.
%  and we both assume that
%consensus is not reached. 
However, Bindel et al.\ focus
on quantifying the social cost of this lack of central coordination between individuals, i.e.,
the \emph{price of anarchy}, and they consider a network-design problem
%% where adding links in the underlying network can
with the objective of reducing the social cost at equilibrium.
Our work on the other hand,
focuses on designing a promotion campaign
%% that aims at fixing the opinions of some of the nodes in the network,
so that, at equilibrium, the overall positive inclination
towards a particular opinion is maximized.
%% Therefore, despite the fact that our model is the same as the one adopted by Bindel et al.\ the computational problems we study are different.

Recently, there has been a lot of work
in the computer-science literature
on identifying a set of individuals
to advertise an item (e.g., a product or an idea)
so that the spread of the item in the network is maximized.
Different assumptions about how information items
propagate in the network has led to a rich literature
of targeted-advertisement methods (e.g., see ~\cite{chen08approximability,evendar2007note,kempe2003maximizing,richardson02mining}).
Although at a high level our work has the same goal as all of these methods,
there are also important differences, as we have already discussed in the introduction.

%In Section~\ref{section:notation} we show a connection between the computation of the Nash equilibrium
%opinions and random walks with absorbing nodes. We use this connection in the algorithms for selecting
%the individuals to target. The idea of random walks with absorbing nodes is also exploited by
%Zhu et al.~\cite{diversity_2007} for the problem of adding diversity to the PageRank ranking.
%The techniques used in their paper bear some similarity to ours, but they are applied to a completely different problem.
%

\section{Problem definition}
\label{section:notation}
\subsection{Preliminaries.}

We consider a social graph $G=(V,E)$ with $n$ nodes and $m$ edges.
The nodes of the graph represent people and the edges represent social affinity between them.
We refer to the members of the social graph by letters such as $i$ and~$j$, and we write $(i,j)$ to denote the edges of the graph.
With each edge $(i,j)$ we associate a weight $w_{ij}\ge 0$,
which expresses the strength of the social affinity or influence from person~$i$ to person~$j$.
We write $N(i)$ to denote the {\em social neighborhood} of person~$i$, that is,
$N(i) = \{ j \mid (i,j)\in E\}$.
Unless explicitly mentioned, we do not make an assumption whether the graph $G$ is directed or undirected; most of our results and our algorithms carry over for both types of graphs.
The directed-graph model is more natural as in many real-world situations the influence $w_{ij}$ from person $i$ to person $j$ is not equal to~$w_{ji}$.

Following the framework
%% and the notation
of Bindel et al.~\cite{DBLP:conf/focs/BindelKO11}
we assume that person $i$ has a {\em persistent internal opinion} $s_i$, which remains unchanged from external influences.
Person $i$ has also an {\em expressed opinion} $z_i$, which depends on their internal opinion $s_i$,
as well as on the expressed opinions of their social neighborhood $N(i)$.
The underlying assumption is that
%%  within a society individuals not only influence each other but they also try to avoid conflicts.
%% As a result they
individuals form opinions that combine their internal predisposition with the opinions of those in their social circle.

We model the internal and external opinions $s_i$ and $z_i$ as real values in the interval~$[0,1]$.
The convention is that 0 denotes a negative opinion, and 1 a positive opinion.
The values in-between capture different shades of positive and negative opinions.
Given a set of expressed opinion values for all the people in the social graph, represented by an {\em opinion vector} $\bfz= (z_i : i\in V)$,
and the vector of internal opinions $\bfs = (s_i: i \in V)$,
we consider that the \emph{personal cost} for individual $i$ is
\begin{equation}
\label{equation:nash}	
c_i(\bfz) = (s_i-z_i)^2 + \sum_{j\in N(i)} w_{ij}(z_i - z_j)^2.
\end{equation}
This cost models the fact that the expressed opinion $z_i$ of an individual $i$ is a ``compromise'' between their own internal belief $s_i$ and the opinions of their neighbors.
As the individual $i$ forms an opinion $z_i$, their internal opinion $s_i$ and the opinions $z_j$ of their neighbors $j\in N(i)$ may have different importance.
The relative importance of those opinions is captured by the weights~$w_{ij}$.

Now assume that, as a result of social influence and conflict resolution, every individual $i$ is selfishly minimizing their own cost $c_i(\bfz)$.
If the internal opinions are \emph{persistent} and cannot change (an assumption that we carry throughout),
%% then the only
%% way for their to achieve this is by changing their expressed opinion $z_i$.
%% It is not hard to see that given the internal opinion of node $i$,
%% and the expressed opinions $z_j$ for every node $j$ in $i$'s neighborhood,
minimizing the cost $c_i$ implies changing the expressed opinion $z_i$
to the weighted average among the internal opinion $s_i$ and the expressed opinions of the neighbors of~$i$.
In other words,
\begin{equation}
\label{equation:averaging}
z_i = \frac{s_i+\sum_{j\in N(i)}w_{ij}z_j}{1+\sum_{j\in N(i)}w_{ij}}.
\end{equation}
In fact, it can be shown that if every person $i$ iteratively
updates their expressed opinion using Equation~(\ref{equation:averaging}),
then the iterations converge to a unique \emph{Nash Equilibrium}
for the game with utilities of players expressed by Equation~\eqref{equation:nash}.
That is, the stationary vector of opinions $\bfz$ is such that
no node $i$ has an incentive to change their opinion to improve their cost $c_i$.

\iffalse
The selfish optimization process we described above guarantees that we reach an equilibrium, but
it does not guarantee that we minimize globally the overall disagreements within the social network.
%each user forms an opinion by minimizing the personal cost function $c_i(\bfz)$.
Given an opinion vector $\bfz$, we define the \emph{global cost} $c(\bfz) = \sum_i c_i(\bfz)$
to be the sum of the individual costs $c_i(\bfz)$.
%Let $\bfz'$ denote the vector that optimizes $c(\bfz)$.
A natural question addressed by Bindel et al.~\cite{DBLP:conf/focs/BindelKO11} is the relationship between
the cost $C_N$ achieved by the Nash-equilibrium opinion vector $\bfz$, and the globally
optimum cost $C^{*}$.
%If $c(\bfz)$ (resp.\
%$c(\bfz')$)
%If $C_{S}$ is the optimal social welfare, and $C_{N}$ is the social welfare of the Nash-equilibrium vector,
%are the social welfares achieved by the two vectors,
%then
Bindel et al.\ show that for undirected graphs $C_N \leq \frac{9}{8}C^{*}$, indicating
that the selfish-optimization cost is very close to the optimum cost.
On the other hand, for directed graphs the Nash cost $C_N$ can be as large as~$\Omega(n)$ times the optimum cost $C^{*}$.
\fi

%In this paper, we assume that the expressed opinions of people in a social network converge to the Nash-equilibrium solution.

\subsection{Problem definition.}

The goal of a promotion campaign is to improve the overall opinion about a product, person, or idea in a social network.
Given an opinion vector $\bfz$, we define the \emph{overall opinion} $g(\bfz)$ as
\[
g(\bfz)=\sum_{i=1}^nz_i,
\]
which is also proportional to the \emph{average expressed opinion} of the individuals in $G$.
The goal of a campaign is to maximize $g(\bfz)$.
%
%Now assume an advertising/marketing campaign that aims towards increasing
%the overall positive vibe about a product; i.e., maximize $g(\bfz)$.
Following the paradigm of Kempe et al.~\cite{kempe2003maximizing}, we
assume that such a campaign relies on selecting
a set of \emph{target nodes} $T$, which are going to be convinced to change
their expressed opinions to $1$.
For the rest of the discussion, we will use
$g(\bfz\mid T)$ to denote the overall opinion in the network, when vector $\bfz$
is the Nash-equilibrium vector obtained under the constraint that
the expressed opinions of all nodes in $T$ are
fixed to 1.
Given this notation, we can define the {\expressed} problem as follows.

\begin{problem}[{\expressed}]\label{problem:expressed}
Given a graph $G=(V,E)$ %with edge weights $w_{ij}$
and an integer $k$, identify a set $T$ of $k$ nodes such that
fixing the expressed opinions of the nodes in $T$ to $1$, maximizes the overall opinion $g(\bfz\mid T)$.
\end{problem}

We emphasize that fixing $z_i=1$ for all $i\in T$
means that Equation~\eqref{equation:averaging}
is only applied for the $z_j$'s such that $j\notin T$, while
for the nodes $i\in T$ the values $z_i$ remain $1$.

The definition of the {\expressed} problem reflects our belief of what constitutes a feasible and effective campaign strategy.
Expressed opinions are more amenable to change, and have stronger effect on the overall opinion in the social network. Thus, it is reasonable for a campaign to target these opinions. We note that other campaign strategies are also possible, resulting in different problem definitions.
For example, one can define the problem where the campaign aims at changing the fundamental beliefs of people by altering their internal opinions $s_i$.
It is also conceivable to ask whether it is possible to improve the overall opinion $g(\bfz)$ by introducing a number of new edges in the social graph, e.g., via a link-suggestion application.
We discuss both of these variants at the end of the paper.
It turns out that from the algorithmic point of view, both these problems  are relatively simple.
For instance, we can show that for undirected graphs, surprisingly, it is not possible to improve the overall opinion $g(\bfz)$ by introducing new edges in the graph.
%The details and the proofs can be found in Section~\ref{section:variants} and in
%% Appendix~\ref{appendix:invariant}.
%the supplementary document.

%\spara{Discussion and problem variants:} Our problem definition encodes our belief that the expressed opinions are more
%amenable to change than the internal ones and thus, campaigns naturally target that.
%After all, from the point of view of the campaign designer, the expressed opinions are the
%most important ones. However from the technical point of view, there are admittedly other possible definitions.
%For example, one can define the problem where the campaign aims at changing the fundamental beliefs of people by altering their internal opinions $s_i$.
%It is also conceivable to ask whether it is possible to maximize the overall opinion $g(\bfz)$ by introducing a number of new edges in the social graph, e.g., via a link-suggestion application.
%It turns out that from the algorithmic point of view, both these problems  are relatively simple.
%For instance, we can show that in the case of undirected graphs, the answer to the question of whether it is possible to maximize $g(\bfz)$ by introducing new edges is negative.
%Both these problem variants are discussed in more detail in the end of Section~\ref{section:complexity} and in Appendix~\ref{appendix:invariant}.

\subsection{Background.}
\label{sec:background}

We now show the connection between computing the Nash-equilibrium opinion vector $\bfz$ and a random walk on a graph with absorbing nodes.
This connection is essential in the analysis of the {\campaign} problem.

\spara{Absorbing random walks:}
Let $H=(X,R)$ be a graph with a set of $N$ nodes $X$, and a set of edges $R$.
The graph is also associated with the following three $N\times N$ matrices:
($i$) the \emph{weight matrix}
$W$ with entries $W(i,j)$ denoting the weight of the edges;
($ii$) the \emph{degree matrix} $D$, which is a diagonal matrix such that $D(i,i)=\sum_{j=1}^{N}W(i,j)$;
($iii$) the \emph{transition matrix} $P=D^{-1}W$,
%% which is defined by normalizing the rows of matrix $W$ by the degree of each row.
which is a row-stochastic matrix; $P(i,j)$ expresses the probability of moving from node $i$ to node $j$ in
a random walk on the graph $H$.

In such a random walk on the graph $H$, we say that a node $b \in X$ is an \emph{absorbing node},
if the random walk can only transition into that node, but not out of it
(and thus, the random walk is absorbed in node $b$).
Let $B \subseteq X$ denote the set of all absorbing nodes of the random walk.
%Given a subset of vertices $B\subseteq X$, we say that the
%states in $B$ are \emph{absorbing states} of a random walk, if once the random walk
%enters any of these states, it never escapes.
The set of the remaining nodes $U = X \setminus B$ are non-absorbing, or
\emph{transient} nodes.
Given this partition of the states in $X$,
the transition matrix of this random walk can be written as follows:
\[
P = \begin{pmatrix} P_{UB} & P_{UU} \\ I & O \end{pmatrix}.
\]
In the above equation, $I$ is an $(N-|U|)\times (N-|U|)$ identity matrix and $O$
a matrix with all its entries equal to $0$;
$P_{UU}$ is the $|U|\times |U|$ sub-matrix of $P$ with the transition probabilities between transient states; and
$P_{UB}$ is the $|U|\times |B|$ sub-matrix of $P$ with the transition probabilities from transient to absorbing states.

%A basic property about
An important quantity of an absorbing random walk is the
expected number of visits to a transient state $j$
when starting from a transient state $i$ before being absorbed.
The probability of transitioning from $i$ to $j$ in exactly $\ell$
steps is the $(i,j)$-entry of the matrix $\left(P_{UU}\right)^\ell$.
Therefore, the probability that a random walk starting from state~$i$ ends in $j$
without being absorbed is given by the $(i,j)$ entry of the  $|U|\times |U|$ matrix
\[
F = \sum_{\ell=0}^{\infty}\left(P_{UU}\right)^{\ell}=\left(1-P_{UU}\right)^{-1},
\]
which is known as the \emph{fundamental matrix} of the absorbing random walk.
Finally, the matrix
\[Q_{UB} = F\,P_{UB}\]
is an $|U|\times |B|$ matrix, with $Q_{UB}(i,j)$ being the probability
that a random walk which starts at transient state $i$ ends up being absorbed
at state $j\in B$.

Assume that each
absorbing node $j \in B$ is associated with a fixed value $b_j$.
If a random walk starting from transient node $i\in U$ gets absorbed in an absorbing node $j\in B$, then we
assign to node $i$ the value $b_j$.
The probability of  the random walk starting from node $i$
to be absorbed in $j$ is $Q_{UB}(i,j)$. Therefore, the expected value of $i$ is $f_i = \sum_{j\in B} Q_{UB}(i,j) b_j$.
If $\bff_U$ is the vector with the expected values for all $i\in U$, and
$\bff_B$ keeps the values $\{b_j\}$ for all $j\in B$, then we have that
\begin{equation}\label{equation:equilibrium}
\bff_U = Q_{UB}\bff_B.
\end{equation}

A fundamental observation, which highlights the connection between our work and random walks with absorbing states, is that the expected value $f_i$ of node $i\in U$ can be computed by repeatedly averaging the values of the neighbors of $i$ in the graph~$H$.
Therefore, the computation of the Nash-Equilibrium opinion vector $\bfz$ can be done using
Equation~\eqref{equation:equilibrium} on an appropriately constructed graph~$H$.
We discuss the construction of $H$ below. More details on absorbing walks can be found in the excellent monograph of Doyle and
Snell~\cite{doyle1984random}.

\spara{The augmented graph.}
We will now show how the theory of absorbing random walks described above can be leveraged
for solving the {\expressed} problem.
This connection is achieved by performing a random walk with absorbing states on an \emph{augmented graph} $H=(X,R)$, whose construction we describe below.

Given a social network $G=(V,E)$ where every edge $(i,j)\in E$ is associated
with weight $w_{ij}$, we construct the augmented graph $H=(X,R)$ of $G$ as follows:
\begin{itemize}
\item[(i)]
the set of vertices $X$ of $H$ is defined as $X = V\cup {\copynodes}$,
where ${\copynodes}$ is a set of $n$ new nodes such that for each node $i \in V$
there is a copy $\sigma(i)\in\copynodes$;
\item[(ii)]
the set of edges $R$ of $H$ includes all the edges $E$ of $G$, plus a new set of edges between each node $i\in V$ and its copy $\sigma(i)\in\copynodes$.
That is, $R=E\cup {\copyedges}$, and ${\copyedges}=\{ (i,\sigma(i)) \mid  i\in V \}$;
\item[(iii)]
the weights of all the new edges $(i,\sigma(i))\in R$ are set to 1, i.e.,
$W(i,\sigma(i))=1$. For $i,j\in V$, the weight of the edge
$(i,j)\in R$ is equal to the weight of the corresponding edge in $G$, i.e., $W(i,j)=w_{ij}$.
\end{itemize}
\iffalse
For the augmented graph $H$, we can instantiate the matrices $P$, $F$, and $Q$ defined in the previous paragraph.
The augmented graph $H$ is a construction that we are going to use throughout the paper.
\fi

Our main observation is that we can compute the opinion vector $\bfz$ that corresponds to the Nash equi\-lib\-rium defined by Equation~\eqref{equation:nash} by performing an  absorbing random walk on the graph $H$.
In this random walk, we set $B={\copynodes}$ and $U=V$, that is, we make all copy nodes in ${\copynodes}$ to be absorbing.
We also set $\bff_B = \bfs$, that is, we assign value $s_i$ to each absorbing node $\sigma(i)$. The Nash-equilibrium opinion-vector $\bfz$
can be computed using Equation~\eqref{equation:equilibrium}, that is, $\bfz = Q_{UB}\bfs$.
The opinion $z_i =\sum_{j\in B} Q_{UB}(i,j) s_j$ is the expected internal opinion value at the node of absorption for a random walk that starts from node $i\in V$.
Given the vector $\bfz$ we can compute the overall opinion $g(\bfz)$.

The {\expressed} problem can be naturally defined in this setting.
Selecting a set of nodes $T$ is equivalent to adding the nodes in $T$ into the set of absorbing nodes $B$, and assigning them value 1.
That is, we have $B={\copynodes}\cup T$ and $U=V\setminus T$.
For the vector $\bff_B$, we have $f_{\sigma(i)} = s_i$ for all $\sigma(i) \in {\copynodes}$, and $f_j = 1$ for all $j \in T$.
We use Equation~\eqref{equation:equilibrium} to compute vector $\bfz$
and using this $\bfz$, we can then compute the overall opinion $g(\bfz \mid T)$.
Hence, the {\expressed} problem becomes the problem of selecting a set of $k$ nodes $T \subseteq V$ to make absorbing with value 1, such that $g(\bfz \mid T)$ is maximized.

\section{Problem complexity}
\label{section:complexity}

In this section, we establish the complexity of the {\expressed} problem by showing that it is an {\NP}-hard problem.
We also discuss properties of the objective function $g(\bfz\mid T)$, which give rise
to a constant-factor approximation algorithm for the {\expressed} problem.

\begin{theorem}
\label{theorem:NP}	
Problem \expressed\ is \NP-hard. 	
\end{theorem}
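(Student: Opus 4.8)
The plan is to prove \NP-hardness by reduction from \vc{} (Vertex Cover), which is \NP-complete even on $3$-regular graphs. Given an instance $(G'=(V',E'),k)$ of \vc, I would build a \expressed{} instance whose graph $G$ consists of the original vertices $V'$ together with one \emph{gadget node} $y_e$ per edge $e=(u,v)\in E'$, where $y_e$ is joined to $u$ and to $v$; all internal opinions are set to $s_i=0$. With $\bfs=\bm{0}$ the baseline overall opinion is $0$, so every unit of positive opinion must be ``earned'' from the chosen targets, and the whole effect of a target set $T$ is channeled through the network structure --- exactly the regime the random-walk machinery of Section~\ref{sec:background} is built to analyze.

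First I would translate $g(\bfz\mid T)$ into the absorbing-random-walk language of the augmented graph. Making the nodes of $T$ absorbing with value $1$ and the copy nodes $\copynodes$ absorbing with value $0$, the equilibrium opinion $z_i$ is precisely the probability that a walk started at $i$ reaches $T$ before being absorbed at a value-$0$ copy; hence $g(\bfz\mid T)=|T|+\sum_{i\notin T} z_i$. The next step is to show that a gadget node $y_e$ acquires a definite positive opinion exactly when at least one endpoint of $e$ lies in $T$ (its walk can then step onto a target), whereas an \emph{uncovered} gadget $y_e$ --- with both endpoints transient and value $0$ --- keeps opinion close to $0$. Choosing $G'$ to be regular is what makes this clean: the ``first-order'' benefit a target confers on its incident gadgets then depends only on $|T|$ and not on \emph{which} vertices are picked, so that the quantity distinguishing one size-$k$ set from another is exactly how many edges it leaves uncovered.

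Putting this together, I would exhibit a threshold $\theta$ (a function of $|V'|$, $|E'|$, $k$, and the chosen edge weight) such that $g(\bfz\mid T)\ge\theta$ for some $|T|=k$ if and only if that $T$ covers every edge, i.e.\ if and only if $G'$ has a vertex cover of size $k$. The main obstacle is that the Nash/absorption equilibrium couples \emph{all} nodes simultaneously, so opinions are not local quantities: the opinion of an uncovered gadget is not literally $0$ but receives small contributions fed back through covered parts of the graph. The crux of the proof is therefore to bound these cross-gadget feedback terms and show that the gap between ``all edges covered'' and ``some edge uncovered'' survives them --- which is where the regularity of $G'$ (neutralizing the degree-dependent part of the objective) and a careful choice of edge weight do the real work, pinning down $\theta$ and making the equivalence exact.
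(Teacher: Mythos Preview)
Your reduction from Vertex Cover on regular graphs is the right starting point, but you take a detour that creates the very obstacle you then worry about. The paper's reduction uses the $d$-regular graph $G'$ \emph{directly} as the social graph---no edge gadgets---with all $s_i=0$. The insight you are missing is that if $T$ is a vertex cover, then every vertex $j\notin T$ has \emph{all $d$ of its neighbors} in $T$ (an edge $(j,i)$ with $i\notin T$ would be uncovered). Hence in the augmented graph each such $j$ is adjacent only to absorbing nodes, the walk from $j$ is absorbed in a single step, and $z_j=\tfrac{d}{d+1}$ \emph{exactly}: no feedback, no cross-talk, nothing to bound. Conversely, if $T$ is not a vertex cover, some $j\notin T$ has a transient neighbor $\ell$, and the positive-probability path $j\to\ell\to\sigma(\ell)$ forces $z_j<\tfrac{d}{d+1}$; combined with the easy upper bound $z_i\le\tfrac{d}{d+1}$ for every $i\notin T$ (the first step hits $\sigma(i)$ with probability $\tfrac{1}{d+1}$), this gives $g(\bfz\mid T)$ strictly below the threshold. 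The threshold is thus a closed-form number, and both directions are exact inequalities rather than estimates.

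Your gadget construction subdivides edges and thereby destroys the all-neighbors-absorbing property: a covered gadget $y_e$ typically has one endpoint in $T$ and one transient endpoint, so its opinion genuinely depends on the rest of the graph, and you are forced into the feedback analysis you correctly flag as the crux. Your proposal does not actually carry that analysis out---``regularity plus a careful choice of edge weight'' is a plan, not an argument---and there is a second gap you do not mention: nothing in your construction forces the optimal $T$ to lie inside $V'$, so the ``only if'' direction must also rule out every size-$k$ set that contains gadget nodes. Dropping the gadgets and exploiting the vertex-cover property directly eliminates both difficulties at once.
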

%%%%%%%%%%%%%%%%%%%%%%%%%%%%%%%%%%%%%%%%%%%%%%%%%%%%%%%%%%%%%%%%%%%%%%%%%%%%%%%

The proof of the theorem appears in the Appendix~\ref{appendix:np-hard}.
%is omitted due to space constraints, but we point out that the
The proof relies on a reduction from the {\sc Vertex Cover on Regular Graphs} problem (VCRG)~\cite{feige03vertex}.

Since the {\expressed} problem is {\NP}-hard,
%%  we cannot devise a polynomial-time algorithm for finding the optimal set of target nodes $T$. Therefore,
we are content with algorithms that approximate the optimal solution in polynomial time.
Fortunately, we can show that the function $\eg(\bfz \mid T)$ is monotone and submodular,
and thus a simple greedy heuristic yields a constant-factor approximation to the optimal solution.

%We now prove the monotonicity and submodularity property of function $\eg(\bfz \mid T)$.

%as we show in Section~\ref{section:algorithms}, a simple
%greedy heuristic yields a good approximation to our problem.
%This result follows from the fact that the function $\eg(\bfz \mid T)$ is monotone and submodular, which we now prove.

\begin{theorem}
\label{theorem:submodular}	
The function $\eg(\bfz\mid T)$ is monotone and sub\-modular.
\end{theorem}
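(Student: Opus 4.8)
The plan is to establish monotonicity and submodularity of $g(\bfz \mid T)$ using the absorbing random-walk interpretation developed in Section~\ref{sec:background}. Recall that with $T$ added to the absorbing set, each transient node $i$ receives value $z_i = \sum_{j} Q_{UB}(i,j)\,b_j$, where the absorbing values $b_j$ are $1$ for nodes in $T$ and $s_j \in [0,1]$ for the copy nodes. The key observation is that $z_i$ equals the expected value at the node of absorption of a random walk started at $i$. I would work directly with this probabilistic characterization rather than with matrix formulas, since it makes both properties far more transparent.

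\emph{Monotonicity.} First I would argue that adding a node to $T$ can only increase (never decrease) every individual opinion $z_i$, and hence the sum $g(\bfz \mid T) = \sum_i z_i$. The clean way to see this is a coupling / domination argument. Consider $T \subseteq T'$. In the walk for $T'$, more nodes are absorbing with value $1$. Any walk that formerly wandered through a node $v \in T' \setminus T$ now gets absorbed there at value $1$, which is the maximum possible absorbing value; all other absorbing values are at most $1$. Intuitively, making a node a ``value-$1$ sink'' can only raise the expected absorbed value for every starting node. I would formalize this either by a pathwise coupling of the two walks (they agree until the first visit to $T' \setminus T$, after which the $T'$-walk stops at value $1$ while the $T$-walk continues to an absorbing value $\le 1$), or by a maximum-principle argument on the harmonic extension defined by Equation~\eqref{equation:averaging}. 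Either route yields $z_i(T') \ge z_i(T)$ for all $i$, giving monotonicity immediately.

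\emph{Submodularity.} This is the main obstacle and the part requiring genuine care. I need to show that for $T \subseteq T'$ and a node $v \notin T'$,
\[
g(\bfz \mid T \cup \{v\}) - g(\bfz \mid T) \;\ge\; g(\bfz \mid T' \cup \{v\}) - g(\bfz \mid T'),
\]
i.e.\ the marginal gain of adding $v$ shrinks as the target set grows. The natural approach is to write the marginal gain of adding $v$ in closed form using absorbing-walk quantities. Adding $v$ to the absorbing set with value $1$ changes $z_i$ according to how often the walk from $i$ visits $v$: specifically, the gain at node $i$ should be expressible as (probability that the walk from $i$ reaches $v$ before being absorbed elsewhere) times (the deficit $1 - z_v$ at $v$ under the current regime). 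Concretely, I expect a formula of the shape
\[
z_i(T \cup \{v\}) - z_i(T) \;=\; p_i^{T}(v)\,\bigl(1 - z_v(T)\bigr),
\]
where $p_i^{T}(v)$ is the probability that the walk from $i$ hits $v$ before absorption under target set $T$. Summing over $i$ gives the total marginal gain. To prove submodularity I would then compare this quantity for $T$ versus $T'$. Enlarging the target set to $T'$ can only \emph{decrease} the hitting probabilities $p_i(v)$ (there are more competing sinks) and can only \emph{decrease} the deficit $1 - z_v$ (since $z_v$ is monotone by the first part); both factors move in the direction that makes the marginal smaller. The delicate point—and where I expect to spend the most effort—is handling the fact that $p_i^{T}(v)$ and $z_v(T)$ change \emph{simultaneously} and are coupled through the same fundamental matrix $F$, so I cannot treat them as independent; I would need the hitting-probability decomposition to be exact (likely via a strong Markov / first-passage decomposition at the first visit to $v$, or via the Sherman--Morrison-type rank-one update of $(I - P_{UU})^{-1}$ when one transient node is converted to absorbing) and then verify that each factor is individually monotone under enlargement of the target set. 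Once the marginal gain is written as a product of two provably antitone (in $T$) nonnegative quantities, summing over all $i \in V$ yields the diminishing-returns inequality and completes the proof of submodularity.
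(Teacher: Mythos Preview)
Your proposal is correct and follows essentially the same route as the paper: the paper derives the exact first-passage decomposition $z(i\mid T\cup\{v\}) - z(i\mid T) = P_{B\cup\{v\}}(v\mid i)\,(1 - z(v\mid T))$, then shows each nonnegative factor is separately antitone in $T$ (the first because enlarging the absorbing set can only lower the chance of being absorbed at $v$, the second by monotonicity), which immediately gives the diminishing-returns inequality after summing over $i$. Two small remarks: the ``delicate point'' you flag is not actually delicate---once the marginal is written \emph{exactly} as a product of two nonnegative factors, the elementary inequality $a'b'\le ab$ for $0\le a'\le a$, $0\le b'\le b$ is all you need, and no further handling of the coupling between the factors is required; and the paper obtains monotonicity as a byproduct of this same formula (the product is manifestly $\ge 0$) rather than via a separate coupling argument, though your coupling route is equally valid.
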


\begin{proof}
We only give here a proof sketch.
A detailed proof is given in Appendix~\ref{appendix:submodularity}.
%the supplementary document.
Recall that $\eg(\bfz\mid T) = \sum_{i\in V} z_i$.
In the absorbing random walk interpretation of the opinion formation process,
we have shown that the expressed opinion of node $i$ is the expected opinion value at the
point of absorbtion for a random walk that starts from node $i$.
That is, $z_i = \sum_{b \in B}P(b\mid i)f_b$, where $B$ is the set of
absorbing nodes, $P(b\mid i)$ is the probability of the random walk starting from node $i$
to be absorbed at node $b$, and $f_b$ the opinion value at node $b$.
When we add a node $x$ to $T$, and hence to the set $B$, some of the probability mass
of the random walk will be absorbed at $x$. Since $x$ has the maximum possible opinion
value, $f_x = 1$, if follows that $z_i$ can only increase, and thus $\eg(\bfz\mid T)$ is monotone.
Furthermore, the less competition there is for $x$ (i.e., the smaller the size of $B$),
the more mass of the random walk will be absorbed in $x$, and the larger the increase of $\eg(\bfz\mid T)$.
Hence $\eg(\bfz\mid T)$ is submodular.
\end{proof}

%%%%%%%%%%%%%%%%%%%%%%%%%%%%%%%%%%%%%%%%%%%%%%%%%%%%%%%%%%%%%%%%%%%%%%%%%%%%%%%%%%%%%%%%%%%%%%%%%%%%%%%%%%%%%%%

\iffalse
Interestingly, the {\internal} problem,
where the advertising campaign aims to change the internal rather than
the expressed opinions of individuals,
is solvable in polynomial time. We prove this fact below.

\begin{theorem}
\label{theorem:poly}
There is a polynomial time algorithm for problem {\internal}.
\end{theorem}

\begin{proof}
%The main observation is the following:
%for solving the {\internal} problem for input
Let $G=(V,E)$ be the input to the {\internal} problem.
We form the augmented graph
$H=(V\cup{\copynodes},E\cup {\copyedges})$
and compute $\bfz$ using Equation~\eqref{equation:equilibrium}
for $B={\copynodes}$ and $U=V$.
In this case, the vector $\bff_\mathrm{B}$ has
values corresponding to the input $s_i$'s, and
the vector $\bff_{U}$ has values corresponding to $z_i$'s.
Thus, we have
\[
\bfz = Q_{{UB}}\bfs,
\]
where
$Q_{{UB}} = (I-P_{{UU}})^{-1}P_{UB}$
is a fixed matrix.
If $Q_j=\sum_{i\in V} Q_{{UB}}(i,j)$ is the $j$-th column sum of the matrix~$Q_{{UB}}$, then
our objective function becomes
\[
g(\bfz) = \sum_{i\in V} z_i = || \bfz ||_1 = || Q_{{UB}} \bfs ||_1 = \sum_{j\in V} Q_j s_j.
\]
Since we want to maximize the previous quantity by selecting a set
 $S\subseteq V$ of size $k$ and setting $s_j=1$
for every $j\in S$, it is sufficient to select the $k$ nodes in $V$ that
 have the largest column sums~$Q_j$.
\end{proof}
Note that a property similar to
the one used in the above proof has also been used
by Even-Dar and Shapira~\cite{evendar2007note}.% in order to solve
%their campaign-design problem.
For the rest of the discussion, we
will refer to
the above algorithm as the {\icampaign} algorithm.
\fi

\section{Algorithms}
\label{section:algorithms}

\subsection{Estimating the Nash-equilibrium vector \bfz.}
\label{sec:opinion-estimation}

A central component in all the algorithms presented in this section is the estimation of the
the opinion function $g(\bfz\mid T)$.
In Section~\ref{sec:background}, we have already discussed that
this can be done by evaluating Equation~\eqref{equation:equilibrium}.
For appropriately defined sets $U$ and $B$,
this requires
computing the matrix
$Q_{{UB}} = \left(1-P_{{UU}}\right)^{-1}P_{{UB}}$.
Hence, this calculation involves a matrix inversion, which is very inefficient.
The reason is that despite the fact that the social graph is typically sparse,
matrix inversion does not preserve sparseness.
Thus, it may be too expensive to even store the matrix~$Q_{{UB}}$.

Instead, we resort to the power-iteration method implied by Equation~\eqref{equation:averaging}:
at each iteration we update the opinion $z_i$ of a node $i\in V$ by averaging
the opinions of its neighbors $j\in N(i)$ and its own internal opinion $s_i$.
During the iterations we do not update the values of opinions that are fixed.
This power-iteration method is known to converge to the equilibrium vector $\bfz$,
and it is highly scalable, since it only involves multiplication of a sparse matrix with a vector.
For a graph with $n$ nodes, $m$ edges, and thus, average degree $d=\frac{2m}{n}$, the algorithm requires $\bigO(nd)=\bigO(m)$ operations per iterations.
Therefore, the overall running time is $\bigO(mI)$, where $I$ is the total number of iterations.
In our experiments we found the the method converges in around 50-100 iterations, depending on the dataset.

\subsection{Algorithms for the {\expressed} problem.}
\label{sec:algos2}

Our algorithms for the \expressed\ problem, include
a constant-factor approximation algorithm as well as several efficient and
effective heuristics.

\spara{The {\greedy} algorithm.}
%The {\greedy} algorithm is motivated by the submodularity property presented in Theorem~\ref{theorem:submodular}.
It is known that the greedy algorithm is a $(1-\frac 1e)$-approximation
algorithm for maximizing a submodular function
$h:Y\rightarrow \field{R}$ subject to cardinality constraints, i.e.,
finding a set $A \subseteq Y$ that maximizes $h(A)$ such that $|A|\le k$~\cite{nemhauser78analysis}.
Consequently, the {\greedy} algorithm
%% for the \expressed\ problem
constructs the set $T$ by adding one node in each iteration.
In the $t$-th iteration the algorithm extends the set $T^{(t-1)}$ by adding the node $i$
%(i.e., $T^{(t)} = T^{(t-1)}\cup\{i\}$)
that maximizes $\eg(\bfz\mid T^{(t)})$ when setting $z_i=1$.
%Pseudo\-code for the \greedy\ algorithm is given in Algorithm~\ref{algorithm:greedy}.

%\begin{algorithm}[tb]
%\begin{algorithmic}[1]
%\Statex {\bf Input:} $G=(V,E)$ and integer $k$
%\Statex {\bf Output:}  Target nodes $T\subseteq V$
%\State $T=\emptyset$
%\For{ $t=1\ldots k$}
%\State $i=\argmax_{j\in V\setminus T}\left\{ \eg(\bfz\mid T\cup\{j\})\right\}$
%\State $T=T\cup\{i\}$
%\EndFor
%\State \textbf{return} $T$
%\end{algorithmic}
%\caption{\label{algorithm:greedy}The {\greedy} algorithm}
%\end{algorithm}

\begin{figure*}[t]
\centering
\includegraphics[width=0.98\textwidth]{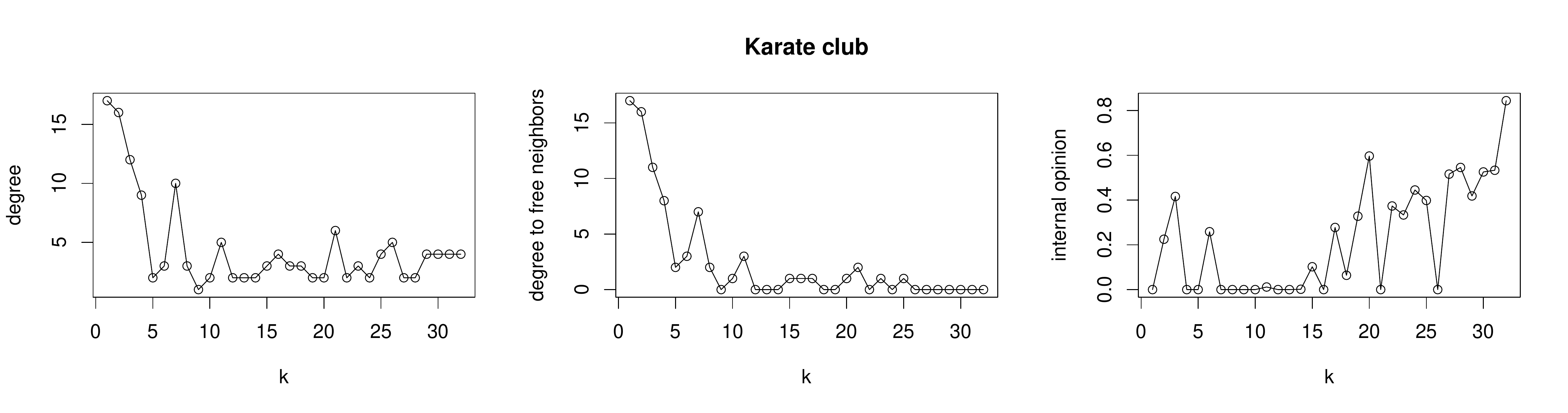}
\caption{\label{figure:newman-order}
Measures of graph nodes plotted in order selected by the \greedy\ algorithm.}
\end{figure*}

The computational bottleneck of {\greedy} is
due to the fact that
we need to compute the Nash-equilibrium opinion vector \bfz\
that results from setting $z_t=1$ for all $t\in T\cup\{j\}$, and we need to do such a computation for all candidate nodes $j\in V\setminus T$.
Overall, for a solution $T$ of size $|T|=k$ {\greedy} needs to perform $\bigO(nk)$ computations of finding the optimal vector~\bfz.
As we saw, each of these computations is performed by a power-iteration in time $\bigO(mI)$, yielding an overall running time $\bigO(nmkI)$.
Such a running time is super-quadratic and therefore the algorithm is not scalable to very large datasets.

One way to speedup the algorithm is by storing, for each node that it is not yet selected, its marginal improvement on the score, at the last time it was computed.
%By the submodularity property, the marginal gain always decreases.
%Thus, we can use this property to prune candidate nodes and not evaluate their score.
This speedup, which is commonly used in optimization problems with submodular functions~\cite{leskovec07cost-effective},
%Although this optimization helps
%greedy scale for moderate-size datasets, it
is not adequate to make the greedy algorithm applicable for large data,
at least for the version of the algorithm described here.
The reason is that in the very first iteration there is no pruning and therefore we need to make $O(n)$ power-iteration computations, yielding again a quadratic algorithm.
To overcome these scalability limitations, we present a number of scalable heuristics.
%As we will see in the experimental section the performance of these heuristics is almost as good as the performance of~\greedy.

\subsection{Designing the heuristics.}
%For all the proposed heuristics we follow the design principle of \greedy, namely we construct the solution set by adding one node at a time.
%Since we know that the \greedy\ algorithm provides a provably good
%solution, we select the nodes to add to the solution set by using
%alternative criteria to characterize the selected nodes. In this way,
%we avoid
%evaluating the objective function $\eg(\bfz\mid T^{(t-1)}\cup\{i\})$ for each node $i$ and in each iteration.

To characterize the nodes selected by \greedy\ we execute the algorithm on small datasets, and we compute a number of measures for each node selected by \greedy.
In particular, for each node we compute measures such as its degree, the average degree of its neighbors, the maximum degree of its neighbors, the value of its internal opinion $s_i$, the average value of $s_i$ over its neighbors, and so on.
Three of the features with the most clear signal are shown in Figure~\ref{figure:newman-order} for the {\karate} dataset (described in detail in Section~\ref{section:experimental-evaluation}).
%The results on Figure~\ref{figure:newman-order} are shown for the datasets \karate\ and \lesmis,
%which are described in detail in Section~\ref{section:experimental-evaluation}, however, we
We obtain similar behavior on all the datasets we tried.

In Figure~\ref{figure:newman-order} we plot measures of nodes in the order selected by the \greedy.
A good measure would be one that is monotonic with respect to this order.
In the first panel, we show the degree of a node in the selection order of \greedy,
and we see that \greedy\ tends to select first high degree nodes.
As shown in the second panel, this dependence is even more clear
for the  {\em free degree}, i.e., the number of neighbors that are not already selected by {\greedy}.
%.\footnote{We define as the number of neighbors $j\in N(i)$ of a node $u$ that the greedy has not already set to $z_j=1$.}
%We see that the dependence on free degree is even more clear.
%This is meaningful, since once the opinion of a node $i$ is set to $z_i=1$, in the next step it is more beneficial to set to 1 the opinion of a node $j$ in a further-away neighborhood of a graph rather than a neighbor of~$i$.
Finally, in the third panel of Figure~\ref{figure:newman-order} we see the internal opinion $s_i$ of nodes in the order selected by the \greedy.
We see that the \greedy\ tends to select first nodes with low internal opinion.
%Again this is meaningful since selecting such nodes has larger effect to the overall opinion.
There are a few exceptions of nodes with high internal opinion $s_i$ selected at the initial steps of greedy.
Such nodes are nodes with high degree, connected to many nodes with small values of~$s_i$.

Armed with intuition from this analysis we now proceed to describe our heuristics.

\spara{The \degree\ algorithm.}
This algorithm simply sorts the nodes of $G=(V,E)$ in decreasing order of their
in degree and forms the set of target nodes $T$ by picking the top-$k$ nodes of the ranking.
The running time of {\degree} is $\bigO(n\log n)$, i.e., the time required for sorting.

\spara{The \freedegree\ algorithm.}
This algorithm is a ``greedy'' variant of {\degree};
{\freedegree} forms the set $T$ iteratively by choosing at every iteration the node with the highest free degree.
The free degree of a node is the sum of the weights of the edges that are incident to it and
are are connected to nodes not already in $T$.
When the set $T$ consists of $k$ nodes, the running time of {\freedegree} is $\bigO(kn)$.

\spara{The \rwr\ algorithm.}
As we saw in Figure~\ref{figure:newman-order}, a good choice for nodes to be added in the solution are not only the nodes of high degree but also the nodes of small value of internal opinion~$s_i$.
The \rwr\ algorithm combines both of these features: selecting nodes with high degree and with small~$s_i$.
This is done by performing a {\underline r}andom {\underline w}alk with {\underline r}estart (\rwr),
where the probability of restarting at a node~$i$ is proportional to $r_i=s_{\max}-s_i$,
where $s_{\max}=\max_{i\in V}s_i$, and ordering the nodes according to the resulting stationary distribution.
The intuition is that a random walk favors high-degree nodes, and using the specific restart probabilities favors nodes with low value of~$s_i$.

%The \rwr\ algorithm recomputes the stationary distribution in each iteration after adding a node $i$ to the current set~$T$.
%The reason is that for the nodes $i$ that have already been added to the current solution set $T$, it is equivalent as having $s_i=z_i=1$, and thus the restarting probability on those nodes becomes~0.
%Thus, the neighbors of nodes that have already been selected tend to receive less probability mass, and their priority to be added in the solution decreases.
%In this way, the \rwr\ algorithm behaves similarly with the \freedegree\ algorithm.

For the restart probability, we use the parameter $\alpha=0.15$, which has been established as a standard parameter of the PageRank algorithm~\cite{brin98anatomy}.
Making one \rwr\ computation can be achieved by the power-iteration method, which similarly to computing the optimal vector~\bfz, has running time $\bigO(mI)$.
Therefore, the overall running time of the algorithm for selecting a set $T$ of size $k$ is $\bigO(mkI)$.

\spara{The {\mins} and {\minz} algorithms.}
The {\mins} algorithm simply selects the $k$ nodes with the smallest value $s_i$.
This heuristic is motivated by the observation that the {\greedy} algorithm tends to select nodes with small value $s_i$.
%, but also by Equation~\eqref{equation:invariant} which implies that this simple rule gives an optimal algorithm for the {\internal} problem.
For completeness, we also experiment with the {\minz} algorithm, which greedily selects and add in the solution set $T$ the node that at the current iteration has the smallest value of expressed opinion~$z_i$.

\section{Experimental evaluation}
\label{section:experimental-evaluation}

\begin{figure*}[t]
\centering
\includegraphics[width=0.3\textwidth]{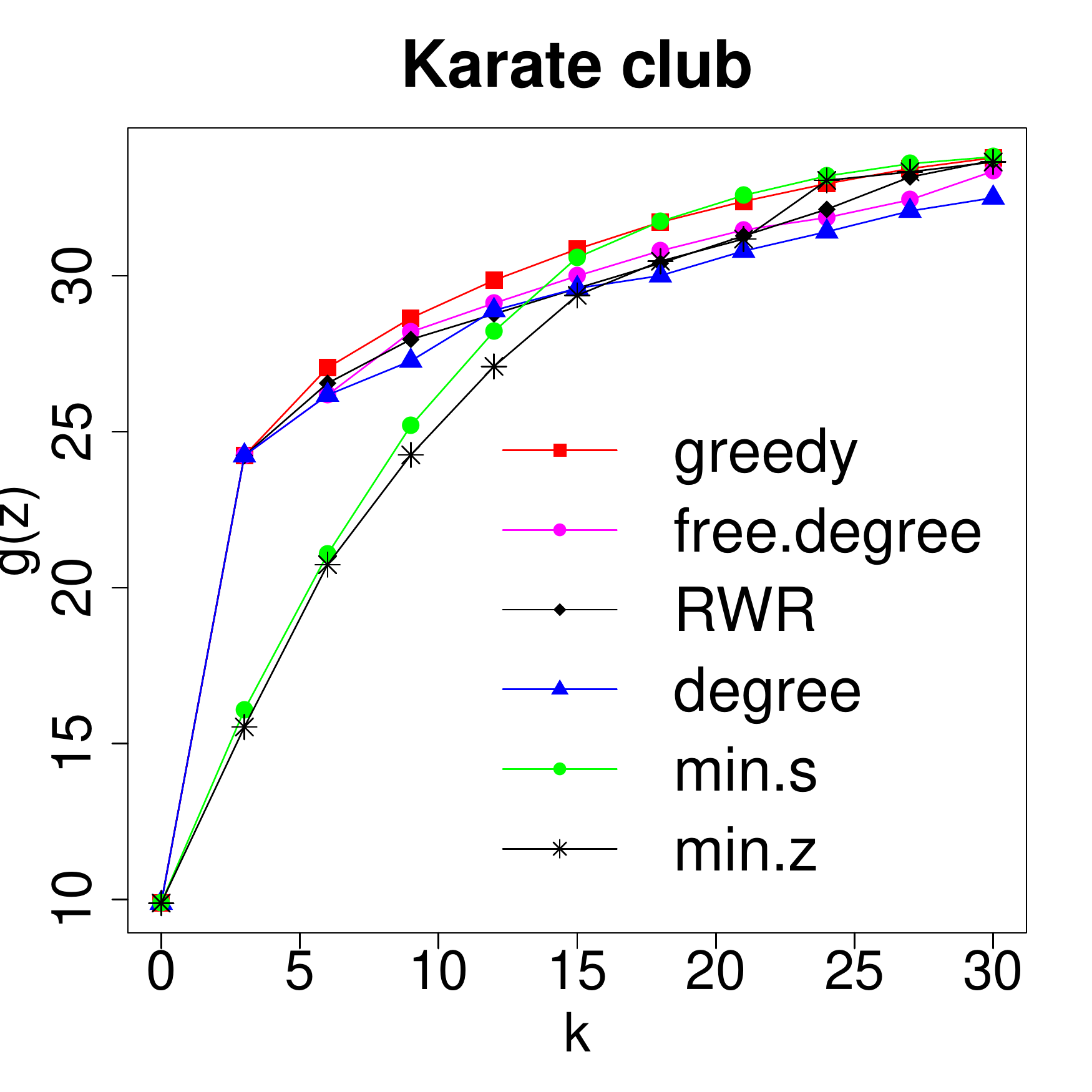}
\hspace{0.5cm}
\includegraphics[width=0.3\textwidth]{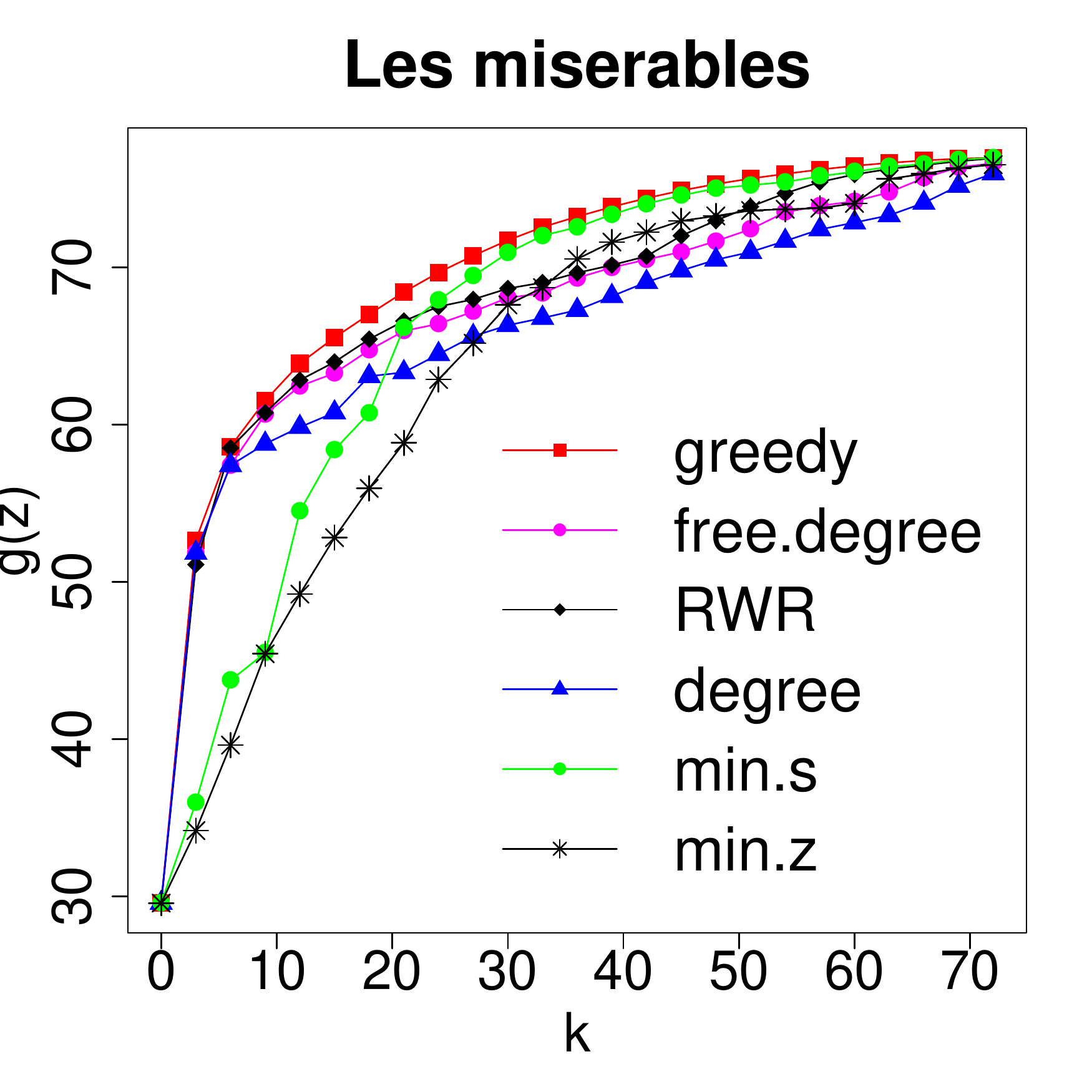}
\includegraphics[width=0.3\textwidth]{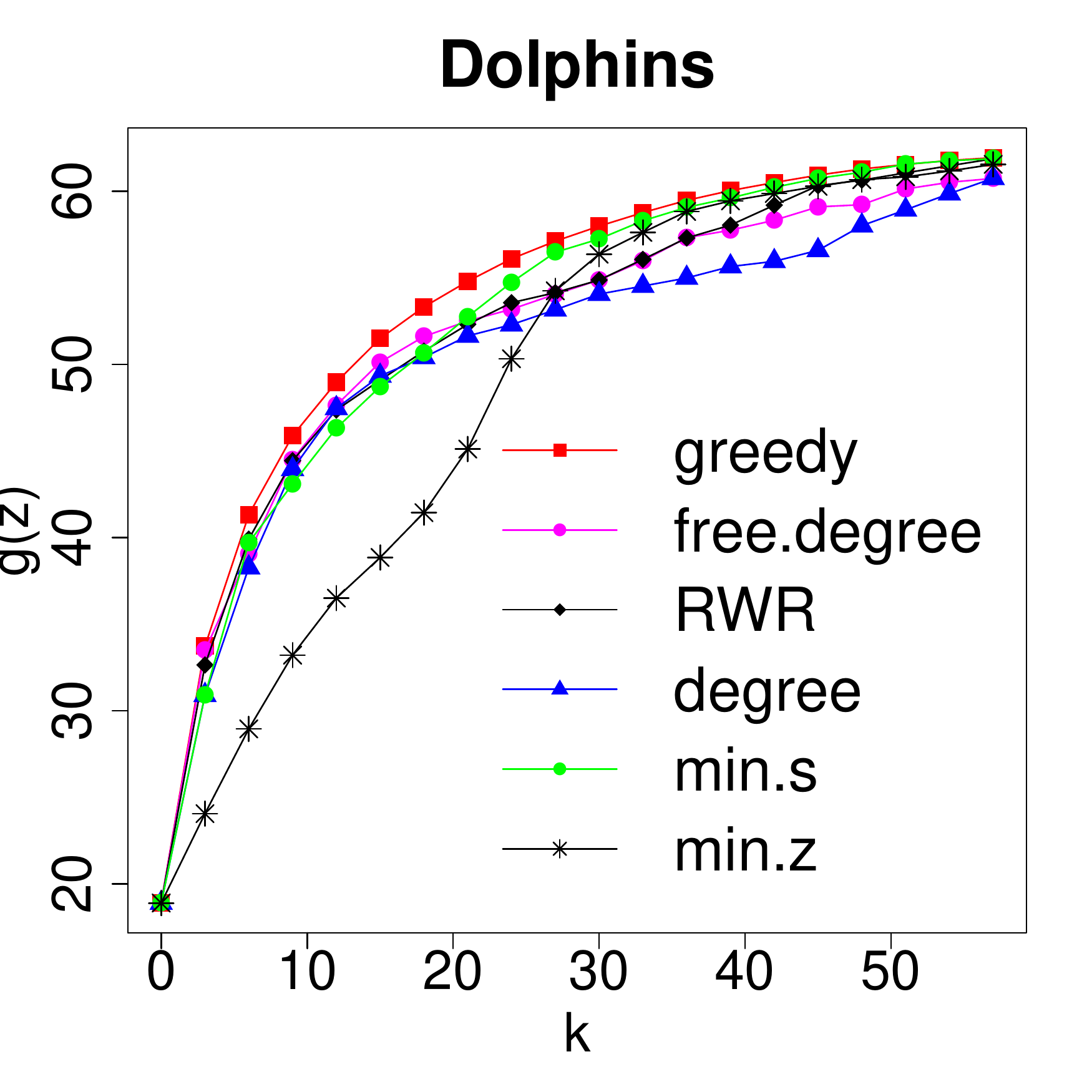}
\caption{\label{figure:newman}
Performance of the algorithms on small networks.}
\end{figure*}

\begin{figure*}[t]
\centering
\includegraphics[width=0.47\textwidth]{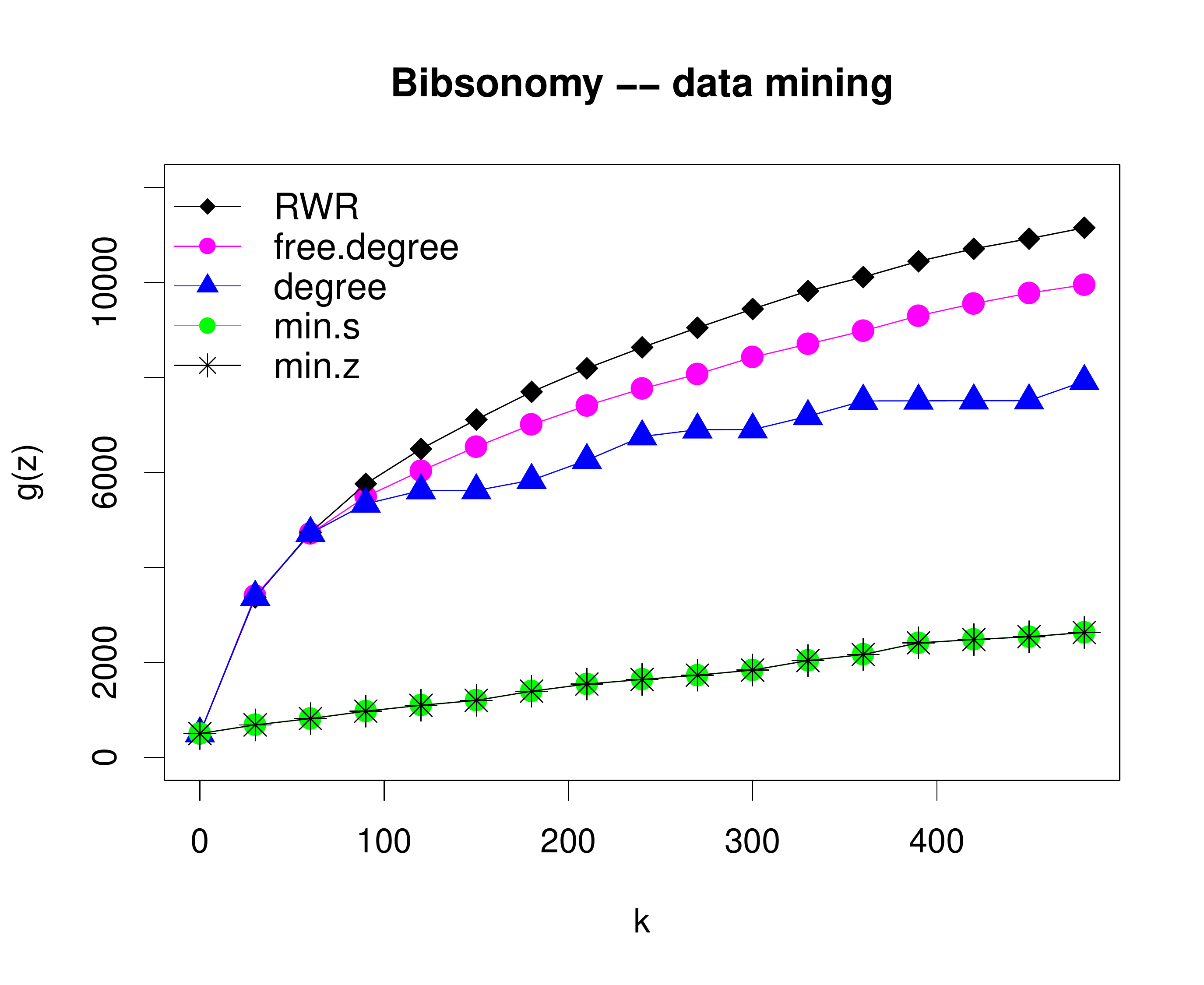}
\includegraphics[width=0.47\textwidth]{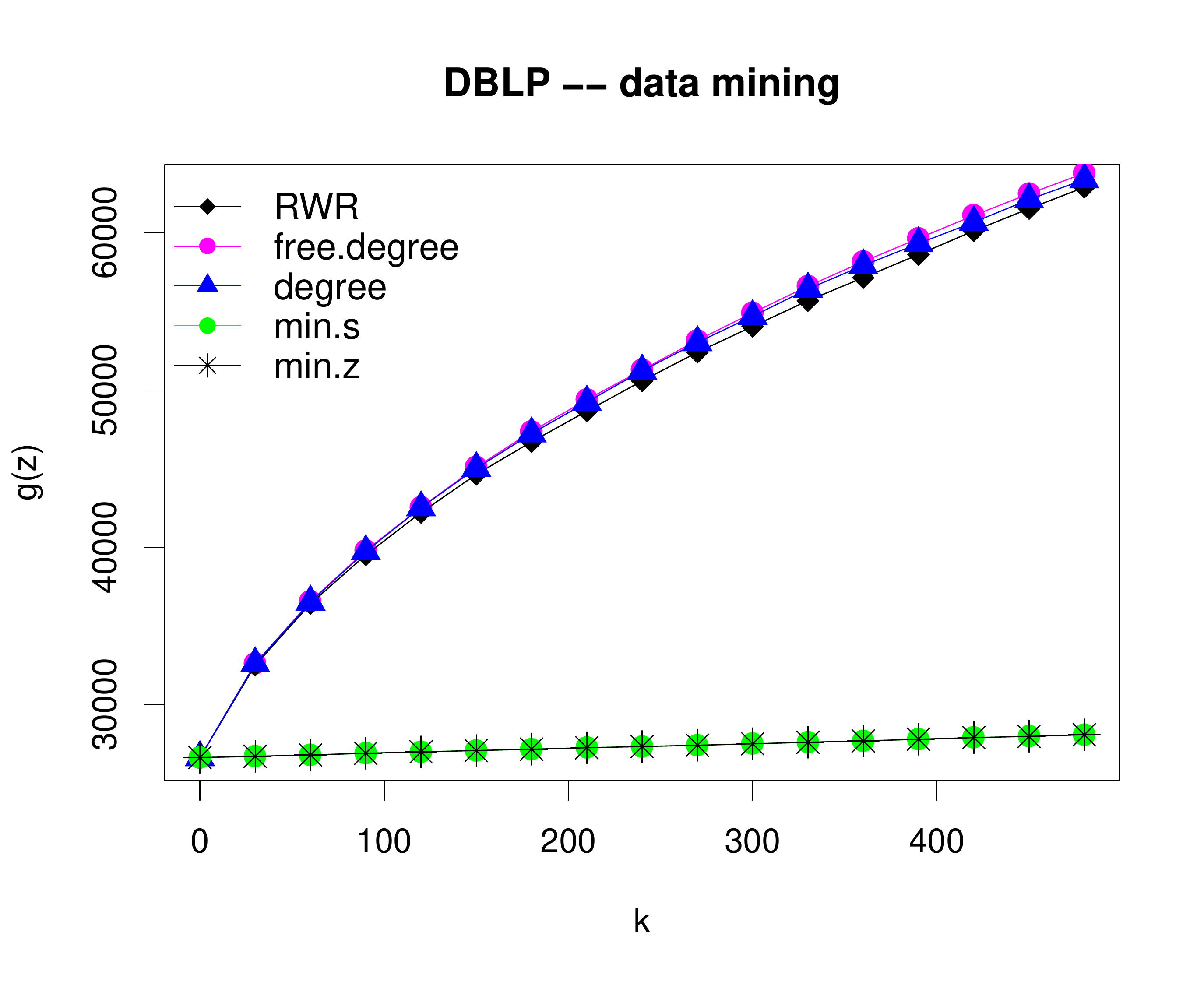}
\caption{\label{figure:bibsonomydblp}
Performance of the algorithms on the \bib\  and \dblp\  networks for the topic ``data mining''.}
\end{figure*}

The objective of our experiments is to
compare the proposed heuristics against \greedy, the algorithm with the approximation guarantee,  and demonstrate their scalability.

\subsection{Small networks.}

We experiment with a number of publicly available small networks.\footnote{www-personal.umich.edu/\~{}mejn/netdata/}
We evaluate our algorithms by reporting the value of
the objective function $g(\bfz)$ as a function of the solution set size $|T|=k$.
Our results for three small networks are shown in Figure~\ref{figure:newman}.
The datasets shown in the figure are the following:
($i$) \karate: a social network of friendships between 34 members of a karate club at a US university in the 1970s~\cite{zachary};
($ii$) \lesmis: co-appearance network of characters in the novel {\em Les Miserables}~\cite{graphbase}; and
($iii$) \dolphins: an undirected social network of frequent associations between 62 dolphins~\cite{dolphins}.
For this set of experiments we set the internal opinions $s_i$ to be a uniformly-sampled value in $[0,1]$.

We see that the difference between all the algorithms is relatively small but their relative performance is consistent.
The \greedy\ algorithm achieves the best results, while the three heuristics, \degree, \freedegree, and \rwr\ come close together.
{\mins} and {\minz} have the poorest performance, even though for larger values of $k$ they improve  and slightly outperform some of the heuristics.
Between the two, {\mins} performs best, outperforming {\minz}, especially for small values of $k$.
Both of those trends are expected: the three heuristics \degree, \freedegree, and \rwr, are better motivated than {\mins}, which in turn, is better motivated than {\minz}.

We obtain similar results for other small networks, although we do not provide the plots for lack of space.

\subsection{Bibliographic datasets: A ``data mining'' campaign.}

We also evaluate our algorithms on two large social networks, derived from bibliographic data.

The first dataset, \bib, is extracted from bibsonomy~\cite{bibsonomy}, a social-bookmarking and publication-sharing system.
%The dataset contains a large number of computer-science related publications.
%Each publication is written by a set of authors.
%The bibsonomy website is visited by a large community of users who use tags to annotate the publications.
From the available data, we extract a social graph of $45\,329$ nodes representing authors and $149\,895$ edges representing co-authorship relations.
For each author we also keep the set of tags that have been used for the papers of that author.

The second dataset, \dblp, is also a co-authorship graph among computer scientists extracted from the {\sc dblp} site.\footnote{www.informatik.uni-trier.de/\~{}ley/db/}
The dataset is a large graph containing $635\,585$ nodes and $1\,423\,716$ edges.
Again, for each author we keep the set of terms
%% , after removing stop-words,
they have been used in the titles of the papers they have co-authored.

In this experiment, we generate the internal opinion vectors
by identifying
keywords related to data mining (e.g., we picked
{\em data}, {\em mining}, {\em social}, {\em networks}, {\em graph}, {\em clustering},
{\em learning}, and {\em community}).
For each author $i\in V$ we then set $s_i$ to be the fraction of the above keywords present in his set of terms.
This setting corresponds to a hypothetical scenario of designing a campaign to promote the ``data mining'' topic among all computer-science researchers.

\iffalse
\begin{figure}[t]
\centering
\includegraphics[width=0.47\textwidth]{figures/bibsonomy-SEMANTIC}
\caption{\label{figure:bibsonomy-semantic}
Performance of the algorithms on the \bib\ network for the topic ``semantic web''.}
\end{figure}
\fi

%\begin{figure}[t]
%\centering
%\includegraphics[width=0.47\textwidth]{figures/dblp-DM}
%\caption{\label{figure:dblp}
%Performance of the algorithms on the \dblp\ network for the topic ``data mining''.}
%\end{figure}

The results of the heuristics for the two datasets are shown in Figure~\ref{figure:bibsonomydblp}.
For the \bib\ dataset (left), there is a
clear distinction between the three
heuristics; {\rwr} clearly performs better
than both {\degree} and {\freedegree}.
This superior performance
of  \rwr\ is expected as this algorithm
takes into account both the degrees and
the values of the internal opinions~$s_i$.
Also the better performance of \freedegree\
compared with the performance of \degree\ is
consistent with the results obtained for smaller networks.
On the other hand, on the \dblp\ dataset (right part of Figure~\ref{figure:bibsonomydblp}) the behavior of
the three heuristics is more surprisingly,
as all three perform almost identical.
Finally, for both datasets, the difference of the three best heuristics \degree, \freedegree, and \rwr\ with the other two heuristics, {\mins} and {\minz} is more pronounced.
In fact, the performance of {\mins} and {\minz} is very poor.

We investigate the difference on the relative performance
of the best three heuristics, \degree, \freedegree, and \rwr, on the two datasets by plotting
%% for each dataset
the degrees of the nodes versus their $s_i$ value.
This is shown in Figure~\ref{figure:degree-vs-si}.
Recall that our intuition for selecting nodes that is to choose nodes that have large degree and small value of~$s_i$.
Figure~\ref{figure:degree-vs-si}
demonstrates that in the {\dblp} dataset, large degree correlates well
with small $s_i$ values, while this is not the case for the {\bib} dataset.
Therefore, for {\dblp} all three heuristics pick high-degree nodes, which makes their performance almost identical.
%In the {\bib} dataset the larger variation of
%$s_i$ values even among the high-degree nodes, gives room to {\rwr}
%to make better educated choices than the other two degree-based heuristics.
%a larger correlation between
%high-degree nodes having
%low $s_i$. This means that algorithms which pick high-degree nodes
%also pick nodes with low $s_i$ values, which makes all the degree-based algorithms perform well for this dataset. At the same time

\begin{figure}[t]
\centering
\includegraphics[width=0.23\textwidth]{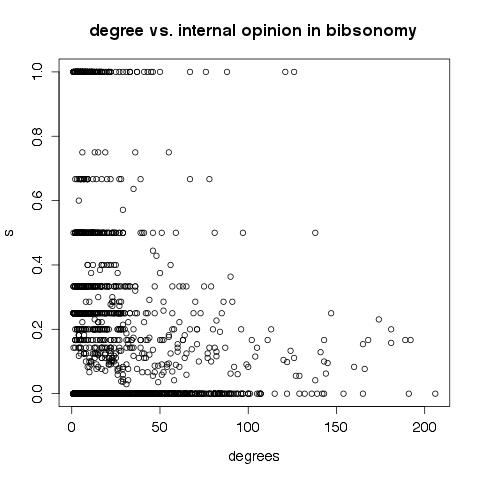}
\includegraphics[width=0.23\textwidth]{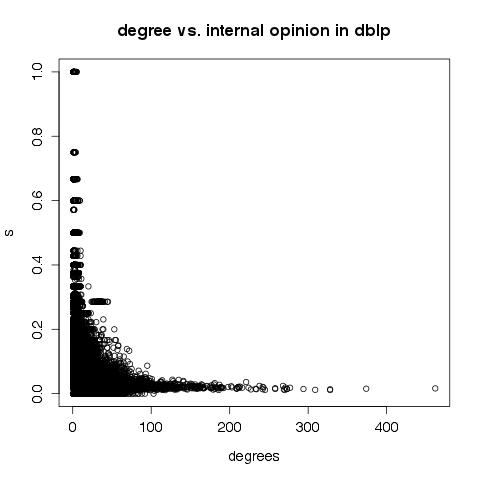}
\caption{\label{figure:degree-vs-si}
Scatter plot of degrees vs.\ internal opinion values $s_i$ in the two datasets, \bib\ and \dblp.}
\end{figure}

\section{Problem variants}
\label{section:variants}

The {\expressed} problem we studied in this paper focuses on campaigns that aim to alter the expressed opinions of individuals.
However, other campaign strategies are also possible. For example one could aim at altering the internal opinions of individuals such that the overall opinion is improved as much as possible.
Formally, the goal would be to select a set of nodes $S$, which are going to be convinced to change their {\em internal opinions} $s_i$ to $1$, such that  the resulting overall opinion~$g(\bfz\mid S)$ is maximized. We call this problem the {\internal} problem. The difference between the  {\expressed} and {\internal} problems is that in the former we are asking to fix the expressed opinions $z_i$ for $k$ individuals, while in the latter we
are asking to fix the internal opinions $s_i$. Even though the difference is seemingly small, the problems are %qualitatively and
computationally very different.

%Qualitatively, the task of altering the internal opinions of individuals is much harder. It is possible to convince people to endorse a viewpoint, a person, or a product, but it takes a lot more time, and effort to change their ingrained beliefs. Also, from a campaign point of view, altering the expressed opinion has a much stronger effect than altering the internal opinions.

Algorithmically, the problem of selecting $S$ individuals to change their internal opinions, so that we maximize $g(\bfz\mid S)$ is much simpler. In fact, we can
show that for an undirected social graph $G=(V,E)$, where each node $i \in V$ has internal opinion $s_i$ and expressed opinion $z_i$, the following invariant holds, independently of the structure of the graph (the set of edges $E$):
\begin{equation}
\label{equation:invariant}	
g(\bfz) =\sum_{i \in V} z_i = \sum_{i\in V} s_i.
\end{equation}
Consequently, the goal of maximizing $g(\bfz)$ by modifying $k$ values $s_i$ can be simply achieved by selecting the $k$ smallest values $s_i$ and setting them to~1.
%This observation should be contrasted with the fact that the {\expressed} problem is \NP-hard.
We note that the above observation does not hold once the expressed opinions of some individuals are fixed, as is the case in the {\expressed} problem.
The proof of the invariant, and its implications are discussed in the Appendix~\ref{appendix:invariant}.
%We refer the reader to
%% Appendix~\ref{appendix:invariant}
%the supplementary document for the details of how the invariant breaks down.
%The above optimal algorithm for the {\internal} problem motivates the {\mins} algorithm, which we use as heuristic for the {\expressed} problem.

The graph invariant has obvious implications for the other variant of the campaign problem, where we seek to maximize $g(\bfz)$ by adding or removing edges to the graph. From Equation~(\ref{equation:invariant}) it follows that for undirected social graphs this problem variant is meaningless; the overall opinion $g(\bfz)$ does not depend on the structure of the graph.
%the set of edges of the graph, and it is not possible to change its value by edge modifications, thus making this problem variant meaningless.
This observation has important implications for opinion formation on social networks. It shows that although the network structure has an effect on the individual opinions of network participants, it does not affect the average opinion in the network. For the campaign problem, this says that you cannot create more goodwill by altering the network. For the study of social dynamics, this implies that the collective wisdom of the crowd remains unaffected by the social connections between individuals.

\section{Conclusions}
\label{section:conclusions}

We considered a setting where opinions of individuals
in a social network evolve through processes of social dynamics, reaching a
Nash equilibrium. Adopting a standard social and economic model of such
dynamics
we addressed the following natural question: 
given a social network of
individuals who have their own internal opinions about an information
item, which are the individuals that need to be convinced to adopt a positive opinion
so that in the equilibrium state, the network (as a whole) has the maximum
positive opinion about the item?
We studied the computational complexity of this problem and
proposed algorithms for solving them exactly or approximately.
Our theoretical analysis and the algorithm design relied on  a connection
between opinion dynamics and random walks with absorbing states.
%For example, both our {\NP}-hardness proof as well as the proof that our
%objective function is submodular relies heavily on this connection.
Our experimental evaluation on real datasets
demonstrated the efficacy of our
algorithms and the effect of the structural characteristics of the
underlying social networks on their performance.

{\small

}

\newpage

\appendix

\section{Proof of Theorem~\ref{theorem:NP}}
\label{appendix:np-hard}

\begin{proof}
We  prove the theorem by reducing an instance of the {\sc Vertex Cover on Regular Graphs} problem (VCRG)~\cite{feige03vertex}
to an instance of the decision version of the {\expressed} problem.
We remind that a graph is called regular if all its nodes have the same degree.

Given a regular graph $G_{VC}=(V_{VC},E_{VC})$ and an integer $K$ the VCRG
problem asks whether there exists a set of nodes $Y\subseteq V_{VC}$ such
that $|Y|\leq K$ and $Y$ is a vertex cover (i.e., for every $(i,j)\in E_{VC}$ it is $i\in Y$ or~$j\in Y$).

An instance of the decision version of the {\expressed} problem
consists of a social graph $G=(V,E)$, internal opinions $\bfs$, an integer $k$ and a number~$\theta$.
The solution to the decision version is ``yes'' iff there exists
a set $T\subseteq V$ such that $|T|\leq k$ and $\eg(\bfz\mid T)\geq \theta$.

Given an instance of the VCRG problem, we will construct an instance of the decision version of {\expressed}
by setting $G=(V,E)$ to be equal to $G_{VC}=(V_{VC},E_{VC})$, $s_i=0$ for every
$i\in V$, $k=K$ and $\theta = (n-k)\frac{d}{d+1}$. Then, we show
that $T\subseteq V$ is a solution to the {\expressed} problem with value $g(\bfz\mid T)\geq \theta$
if and only if $T$ is a vertex cover of the input instance of VCRG.

In order to show this, we use
the absorbing random walk interpretation of the {\expressed} problem.
Recall that by the definition of the {\expressed} problem, every node $i\in T$ becomes an absorbing node with value $z_i=1$.
For every other node $j\notin T$ we compute a value $z_j$, which is the expected value at the absorption point of a random walk that starts from $j$. Since $s_i = 0$ for all $i \in V$ and $z_i = 1$ for all $i \in T$,
the value $z_j$ represents the probability that the random walk starting
from $j$ will be absorbed in some node in $T$.

Now suppose that $T$ is a vertex cover for $G_{VC}$. Then, for every non-absorbing vertex $j \in V\setminus T$,
for each one of the $d$ edges $(j,i)\in E$ incident on $j$,
it must be that $i \in T$; otherwise edge $(j,i)$ is not covered, and $T$ is not a vertex cover.
Therefore, in the augmented graph $H$, node $j$ is connected to $d$ nodes in $T$, and to node $\sigma_j$, all of them absorbing.
%Thus all the $d+1$ edges incident on $i$ lead directly to an absorbing node.
A random walk starting from $j$ will be absorbed and converge in a single step.
The probability of it being absorbed in a node in $T$ is $z_j = \frac{d}{d+1}$.
There are $n-k$ nodes in $V\setminus T$, therefore, $\eg(\bfz \mid T) = (n-k)\frac{d}{d+1}$.

%Consider now the augmented graph $H$ we defined before.
%In this graph all the vertices emanating from vertex $v$ lead directly to an absorbing node in the graph: either to one
%of the nodes in $T$, or to the additional node $u$ which is paired with $v$. Since the value for the nodes in $T$ is 1, and the value of $u$ is zero
%the $z_v$ value for node $v$ is the same as the probability of being absorbed to some node in $T$. Since there are $d+1$ possible transitions from node $v$ each one with equal probability  that lead directly to an absorbing node it follows that $z_v =\frac{d}{d+1}$.

If $T$ is not a vertex cover for $G_{VC}$, then there is an edge $(j,\ell) \in E$, such that $j,\ell \not\in T$. As we noted before, $z_j$ is the probability of being absorbed in some node in $T$.
%In the augmented graph, node $j$ is connected to node $\sigma_j$, and the
The transition probability of the edge $(j,\sigma_j)$ is $\frac 1{d+1}$, therefore, node $j$ has probability at least $\frac 1{d+1}$ of being absorbed in $\sigma_j$. Since there is a path from $j$ to $\sigma_\ell$ with non-zero probability, the probability of $j$ being absorbed in node $\sigma_\ell$ is strictly greater than zero. Therefore, the probability of being absorbed in some node not in $T$ is strictly greater than $\frac 1{d+1}$, and thus $z_j < \frac d {d+1}$. It follows that $\eg(\bfz \mid T) <(n-k)\frac{d}{d+1}$.
\end{proof}

\section{Proof of Lemma~\ref{theorem:submodular}}
\label{appendix:submodularity}

\begin{proof}
Recall that $\eg(\bfz\mid T) = \sum_{i\in V} z_i$, where the values of vector $\bfz$ are computed using Equation~\eqref{equation:equilibrium}.
As we have already described, we can view the computation of
the vector $\bfz$ as performing
a random walk with absorbing nodes on the augmented graph $H=(V\cup{\copynodes},E\cup {\copyedges})$.
%where the set of nodes $B$ are absorbing.
Assume that $B$ is the set of absorbing nodes, and that
each $b\in B$ is associated with value $f_b$.
% Using this setting
Let $P_B(b\mid i)$ be the probability that a random walk
that starts from $i$ gets absorbed at node $b$, when the set of absorbing nodes is $B$.
The expressed opinion of node $i\notin B$ is
\[
z_i = \sum_{b \in B}P_B(b\mid i)f_b.
\]
%Note that $\sum_{b \in B}P_B(b\mid i) = 1$ (since the random walk should be absorbed somewhere), and
%thus $z_i$ is the expected value of a random walk that starts at $i$.
Since this value depends on the set $B$,
we will write $z(i\mid B)$ to denote the
value of $z_i$ when the set of absorbing nodes is $B$.

Initially, the set of absorbing nodes is $B = {\copynodes}$ and $f_b = s_b$ for all $b \in B$.
When we select a subset of nodes $T \subseteq V$ such that their expressed
opinions are fixed to 1, we have that $B = {\copynodes} \cup T$,
$f_b = s_b$ for all $b \in {\copynodes}$ and $f_b = 1$ for all $b \in T$.
Since the set of nodes ${\copynodes}$ is
always part of the set $B$, and the parameter that we are interested in for this proof is the
target set of nodes $T$, we will use $z(i\mid T)$ to denote $z(i\mid B)$ where $B = {\copynodes} \cup T$.
We thus have
\[
\eg(\bfz\mid T) = \sum_{i\in V} z(i \mid T).
\]
Note that the summation is over all nodes in $V$ including the nodes in $T$. If $i \in T$, then $P_B(i\mid i) = 1$, and $P_B(j\mid i) = 0$ for all $j \neq i$. Therefore, $z(i\mid T) = 1$ for all $i \in T$.

We now make the following key observation. Let $j$ be a non-absorbing node in $V\setminus T$. We have that
\[
z(i\mid T) = \sum_{b \in B} \left (P_{B\cup\{j\}}(b\mid i) + P_{B\cup\{j\}}(j\mid i)P_B(b\mid j)\right )f_b.
\]
%\begin{eqnarray*}
%\lefteqn{
%z(i\mid T) =}\\ && \sum_{b \in B} \left (P_{B\cup\{j\}}(b\mid i) + P_{B\cup\{j\}}(j\mid i)P_B(b\mid j)\right )f_b.
%\end{eqnarray*}
The equation above follows from the observation that we can express the probability of a random walk starting from $i$ to be absorbed in some node $b \in B$ as the sum of two terms: (i) the probability $P_{B\cup\{j\}}(b\mid i)$ that the random walk is absorbed in $b$, while avoiding passing through $j$ (thus we add $j$ in the absorbing set); (ii) the probability $P_{B\cup\{j\}}(j\mid i)$ that the random walk is absorbed in $j$ while avoiding the nodes in $B$ (that is, the probability of all paths that go from $i$ to $j$ of arbitrary length, without passing through $j$ or $B$), times the probability $P_B(b\mid j)$ of starting a new random walk from $j$ and getting absorbed in $b$ (being able to revisit $j$ and any node in $V\setminus T$).

If we add node $j$ into the set $T$ we have that
\begin{eqnarray*}
\lefteqn{z(i\mid T\cup\{j\}) - z(i\mid T)}\\
& = & \sum_{b \in B\cup\{j\}} P_{B\cup\{j\}}(b\mid i)f_b \\
& & - \sum_{b \in B} \left (P_{B\cup\{j\}}(b\mid i) + P_{B\cup\{j\}}(j\mid i)P_B(b\mid j)\right )f_b \\
%& = & \sum_{b \in B} P_{B\cup\{j\}}(b\mid i)f_b + P_{B\cup\{j\}}(j\mid i)\\
%& & - \sum_{b \in B} P_{B\cup\{j\}}(b\mid i)f_b - P_{B\cup\{j\}}(j\mid i) \sum_{b \in B} P_B(b\mid j)f_b \\
& = & P_{B\cup\{j\}}(j\mid i)\left( 1- \sum_{b \in B} P_B(b\mid j)f_b\right)\\
& = & P_{B\cup\{j\}}(j\mid i)\left(1-z(j\mid T)\right) \geq 0.
\end{eqnarray*}

Hence,
%\begin{eqnarray*}
%\Delta \eg(T,j) & = & \eg(\bfz\mid T\cup\{j\}) - \eg(\bfz\mid T)\\
%& = & \sum_{i \in V} z(i\mid T\cup\{y\}) - z(i\mid T)\\
%& = & \left(1-z(j\mid T)\right)  \sum_{i \in V} P_{B\cup\{j\}}(j\mid i) \geq  0.
%\end{eqnarray*}
\[
\Delta \eg(T,j) = \left(1-z(j\mid T)\right)  \sum_{i \in V} P_{B\cup\{j\}}(j\mid i) \geq  0.
\]
Therefore, we can conclude that function
$\eg(\bfz\mid T)$ is monotone with respect to the set of target nodes $T$.

We now need to show that $\eg(\bfz\mid T)$ is submodular, that is for any $T,T'$ such that $T\subseteq T'$, and for any node $j \in V$, we have that $\Delta \eg(T',j) - \Delta \eg(T,j) \leq 0$. For
this we will use two random walks: one with absorbing
states $B={\copynodes} \cup T$ and the other
with absorbing states $B'= {\copynodes} \cup T'$. Following reasoning and notation similar to the one we used for monotonicity we have that
\begin{eqnarray*}
\lefteqn{\Delta \eg(T',j) - \Delta \eg(T,j)}\\
& = & \left(1-z(j\mid T')\right)\sum_{i \in V} P_{B'\cup\{j\}} (j\mid i) \\
& &- \left(1-z(j\mid T)\right)\sum_{i \in V} P_{B\cup\{j\}}(j\mid i) .
\end{eqnarray*}
From the monotonicity property we have that
\[
1-z(j\mid T') \leq 1-z(j\mid T).
\]
Also, as the number of
absorbing nodes increases, the probability of
being absorbed in a specific node $j$ decreases, since the probability of the random walk to be absorbed in a node other than $j$ increases.
Therefore, we also have that
\[
P_{B'\cup\{j\}} (j\mid i) \leq P_{B\cup\{j\}}(j\mid i).
\]
Combining these last two observations we conclude that
$\Delta \eg(T',j) - \Delta \eg(T,j) \leq 0$ which shows that the function
is submodular.
\end{proof}

\section{Graph invariants}
\label{appendix:invariant}

In this section we prove a graph invariant related to the sum of the values $z_i$ and~$s_i$.
This invariant has repercussions in the following scenarios:
\begin{itemize}
\item[($i$)] maximize $g(\bfz)$ by modifying only the internal opinions $s_i$ of the users (problem {\internal} in Section~\ref{section:variants}); and
\item[($ii$)] maximize $g(\bfz)$ by adding edges in the social graph, for instance, recommend friendships or certain accounts for users to connect and follow.
\end{itemize}

We prove the invariant in a slightly more general setting than the one we consider in the paper.
We then formulate the more special case of the invariant for our exact problem setting, and we discuss its implications in the above-mentioned scenarios ($i$) and~($ii$).

Consider an undirected graph $G = (V,E)$.
We use $w_{ij}$ to denote the weight of edge $(i,j)$ and $W_i$ to denote the total weight of all edges incident on node $i$.
We assume that the vertices in $V$ are partitioned in two sets $U$ and $B$.
The nodes in $B$ are \emph{absorbing} nodes for the random walk.
Each node $j \in B$ is associated with a value $f_j$. The value $f_j$ can be either the internal opinion of node $j$ (in which case $f_j = s_j$), or the fixed expressed opinion of node~$j$ (in which case $f_j=1$).
For each node $i \in U$ we will compute a value $z_i$ which is the expected value at the node of absorption for a random walk that starts from node $i$, as given by Equation~(\ref{equation:equilibrium}).

We further make the assumption that the set of absorbing nodes can be partitioned into $|U|$ disjoint subsets $\{B(i)\}$,
one for each node $i \in U$, such that the nodes in $B(i)$ are connected \emph{only} with the node $i$.
We can make this assumption without loss of generality, since in the case that a node $j \in B$ is connected to $k$ nodes $\{i_1,...,i_k\}$ in $U$, we can create $k$ copies of $j$, each with value $f_j$, and connect each copy with a single node in $U$ with an edge of the same weight, while removing the original node $j$ from the graph.
In the resulting graph the $z_i$ values computed by the absorbing random walk are the same as in the original graph.

To introduce some additional notation let $E_U$ denote the set of edges between non-absorbing nodes, and let $E_B$ denote the set of edges between nodes in $U$ and in $B$.
Note that by the construction above there are no edges between the nodes in $B$. Such edges would not have any effect anyway, since the nodes in $B$ are absorbing. Given a node $i \in U$, let $N(i)$ denote the set of neighbors of $i$, let $U(i)$ denote the set of non-absorbing neighbors of $i$, and let $B(i)$ denote the set of absorbing neighbors of $i$.

From the definition of the absorbing random walk we have that
\[
z_i = \frac{1}{W_i}\sum_{j:j\in B(i)} w_{ij}f_j + \frac{1}{W_i}\sum_{j:j\in U(i)} w_{ij}z_j,
\]
and thus
\[
W_i z_i = \sum_{j:j\in B(i)} w_{ij}f_j + \sum_{j:j\in U(i)} w_{ij}z_j.
\]
Summing over all $i \in U$ we get
\begin{eqnarray}
\sum_{i\in U} W_i z_i & = & \sum_{i\in U}\sum_{j:j\in B(i)} w_{ij}f_j + \sum_{i\in U}\sum_{j:j\in U(i)} w_{ij}z_j \nonumber \\
\label{equation:invariant_1}
& = & \sum_{(i,j) \in E_B} w_{ij}f_j + \sum_{(i,j) \in E_U} w_{ij}\left(z_j + z_i\right).
\end{eqnarray}
The left-hand side can also be written as:
\begin{eqnarray}
\sum_{i\in U} W_i z_i & = & \sum_{i\in U} \sum_{j\in N(i)} w_{ij} z_i \nonumber \\
 & = & \sum_{i\in U} \sum_{j\in U(i)} w_{ij} z_i + \sum_{i\in U} \sum_{j\in B(i)} w_{ij} z_i \nonumber \\
\label{equation:invariant_2}
 & = & \sum_{(i,j) \in E_U} w_{ij}\left(z_j + z_i\right) + \sum_{(i,j) \in E_B} w_{ij}z_i .
\end{eqnarray}
By Equations~(\ref{equation:invariant_1}) and~(\ref{equation:invariant_2}) we obtain
\begin{equation}
\label{eq:general:z-s}
\sum_{(i,j) \in E_B} w_{ij}\left(z_i - f_j\right) = 0.
\end{equation}
Equation~(\ref{eq:general:z-s}) is the most general form of our invariant.
The equation relates the values of $z_i$ and $f_j$ via the weights $w_{ij}$ across the edges $E_B$, i.e., only the edges between absorbing and non-absorbing nodes.
The set of edges $E_U$ between the non-absorbing nodes does not play any role.

We now consider the special case in which Equation~(\ref{eq:general:z-s}) is applied to graphs considered in this paper, that is, in augmented graphs of type $H = ( V\cup \copynodes, E\cup \copyedges)$, as defined in Section~\ref{section:notation}.
In that case, each node $i \in V$ is connected to a single absorbing node $\sigma(i) \in \copynodes$, which has value $f_{\sigma(i)} = s_i$.
%we have that $|B(i)| = 1$ for all nodes $i \in U$, i.e., no term $z_i$ appears more than once in Equation~(\ref{eq:general:z-s}).
Furthermore, for each edge $(i,j) \in \copyedges$, we have $w_{ij} = w>0$, namely, all edges to absorbing nodes have the same weight.
In this case, the invariant becomes
\begin{equation}
\tag{\ref{equation:invariant}}	
g(\bfz) =\sum_{i \in V} z_i = \sum_{i\in V} s_i.
\end{equation}
As already discussed in Section~\ref{section:variants},
Equation~(\ref{equation:invariant}) has the following implications.

\begin{itemize}
\item[($i$)]
Regarding the problem {\internal}, that is, when we ask to maximize $g(\bfz)$ by modifying only the internal opinion values $s_i$, it is easy to see that the maximum increase occurs when selecting the $k$ smallest values $s_i$ and setting them to~1.
This observation motivates the algorithm {\mins} described in Section~\ref{section:algorithms}.
\item[($ii$)]
Consider the following problem: We want to maximize $g(\bfz)$ by only adding or removing edges in the social graph and without modifying any of the values $s_i$ or $z_i$.
Equations~(\ref{eq:general:z-s}) and~(\ref{equation:invariant}) provide an expression for $g(\bfz)$ that is independent on the structure of the graph defined by the edges in $E_U$, and thus, show that it is not possible to change $g(\bfz)$ by adding or removing edges.
\end{itemize}

\flushend
%\fi

\end{document}